\documentclass[runningheads]{llncs}
\usepackage{amsmath,graphicx,amssymb}

\def\email#1{{\normalsize\tt #1}}
\def\avg#1.{\langle #1\rangle}
\def\C{\mathbb C}
\def\half{\frac12}
\def\v{{\mathbf v}}

\mainmatter

\title{On Krawtchouk Transforms}
\author{Philip Feinsilver\inst{1} and Ren{\'e} Schott\inst{2}}
\institute{Southern Illinois University, 
 {Carbondale, IL. 62901, U.S.A.}
\email{pfeinsil@math.siu.edu}
\and {IECN and LORIA, 
Nancy-Universit\'e, Universit\'e Henri Poincar\'e\\
54506 Vandoeuvre-l\`es-Nancy, France}\\
\email{schott@loria.fr}
}


\begin{document} 

\maketitle

\begin{abstract}
Krawtchouk polynomials appear in a variety of contexts, most notably as orthogonal polynomials and in coding theory via the Krawtchouk transform.
We present an operator calculus formulation of the Krawtchouk transform that is suitable for computer implementation.
A positivity result for the Krawtchouk transform is shown. Then our approach is compared with the use of the Krawtchouk transform
in coding theory where it appears in MacWilliams' and Delsarte's theorems on weight enumerators. We conclude with a construction
of Krawtchouk polynomials in an arbitrary finite number of variables, orthogonal with respect to the multinomial distribution.
\end{abstract}

\thispagestyle{empty}

\section{Introduction}
Krawtchouk polynomials appear originally as orthogonal polynomials for the binomial distribution \cite{FS2,Sz}, 
and in coding theory via the Krawtchouk transform in the context of MacWilliams' theorem on weight enumerators as well in 
Delsarte's extension to association schemes \cite{H,SMcW}. 
They play a role in discrete formulations of quantum mechanics \cite{APW,Lo1,Lo2,SA},
transforms in optics \cite{AW}, as well as in recent developments in image analysis \cite{Y}. \\

We present an operator calculus formulation of the Krawtchouk transform that not only is theoretically elucidating, it is highly suitable for computer implementation.
A consequence of our formulation is a positivity theorem for the Krawtchouk transform of polynomials. We indicate connections with the
transform appearing in coding theory.\bigskip

\hrule
\noindent\flushbottom{\small This work appeared in:\\
Serge Autexier, Jacques Calmet, David Delahaye, Patrick D.~F. Ion, Laurence
  Rideau, Renaud Rioboo, and Alan~P. Sexton, editors.
\newblock {\em Intelligent Computer Mathematics, 10th International Conference,
  AISC 2010, 17th Symposium, Calculemus 2010, and 9th International Conference,
  MKM 2010, Paris, France, July 5-10, 2010. Proceedings}, volume 6167 of {\em
  Lecture Notes in Computer Science}. Springer, 2010.
}

\section{Krawtchouk polynomials as a canonical Appell system}
\subsection{Generating function}
Consider a Bernoulli random walk starting at the origin, jumping to the left with probability $q$, to the right with probability $p$, $p+q=1$, $pq\ne0$.
After $N$ steps, the position is $x$, with $j=(N-x)/2$ denoting the number of jumps to the left.\bigskip

Start with the generating function
\begin{align*}
G(v)&=(1+2qv)^{(N+x)/2}(1-2pv)^{(N-x)/2}\\ &=(1+2qv)^{N-j}(1-2pv)^j=\sum_{n=0}^N \frac{v^n}{n!}K_n
\end{align*}

where we consider $K_n$ as a function of $x$ or of $j$ according to context. \bigskip

\subsection{Orthogonality}\label{sec:ortho}
We check orthogonality with respect to the binomial distribution, $j$ running from $0$ to $N$, with corresponding probabilities 
$\displaystyle \binom{N}{j}q^jp^{N-j}$. Using angle brackets to denote expected value, we wish to show that
$\avg G(v)G(w).$ is a function of the product $vw$. 

\begin{align*}
\avg G(v)G(w).&=\sum\binom{N}{j}q^jp^{N-j}(1+2qv)^{N-j}(1-2pv)^j(1+2qw)^{N-j}(1-2pw)^j\cr
&=\sum\binom{N}{j}q^jp^{N-j}(1+2q(v+w)+4q^2vw)^{N-j}(1-2p(v+w)+4p^2vw)^{j}\cr
&=(p+2pq(v+w)+4pq^2vw+q-2pq(v+w)+4p^2qvw)^N\cr &=(1+4pqvw)^N
\end{align*}
which immediately gives the squared norms $\|K_n\|^2=(n!)^2\binom{N}{n}(4pq)^n$.
\subsection{Canonical Appell system}
For an Appell system with generating function $\exp[zx-tH(z)]$, a corresponding canonical Appell system has a generating function of the form
$$\exp[xU(v)-tH(U(v))]$$
where $U(v)$ is analytic about the origin in $\C$, with analytic inverse $V(z)$, and $H(z)$ is the logarithm of the Fourier-Laplace transform
of the distribution of $x$ at time $1$. Here we have $N$ replacing $t$, and write
$$G(v)=(1+2(q-p)v-4pqv^2)^{N/2}\,\left(\frac{1+2qv}{1-2pv}\right)^{x/2}$$
identifying
$$U(v)=\half\,\log\frac{1+2qv}{1-2pv}\,,\quad \text{and}\quad H(z)=\log(pe^z+qe^{-z})$$
One checks that
$$\log(pe^{U(v)}+qe^{-U(v)})=p\sqrt{\frac{1+2qv}{1-2pv}}+q\sqrt{\frac{1-2pv}{1+2qv}} =(1+2(q-p)v-4pqv^2)^{-1/2}$$
which verifies the form $\exp[xU(v)-NH(U(v))]$.\bigskip

Solving $z=U(v)=U(V(z))$, we find
$$V(z)=\frac{(e^2-e^{-z})/2)}{pe^z+qe^{-z}}=\frac{\sinh z}{pe^z+qe^{-z}}$$
\section{Krawtchouk expansions}
The generating function, with $z=U(v)$, is
$$G(v)=e^{zx-NH(z)}=\sum_{n\ge0}\frac{V(z)^n}{n!}\,K_n(x,N)$$
Rearrange to get,
$$e^{zx}=(pe^z+qe^{-z})^N\sum_{n\ge0}\left(\frac{\sinh z}{pe^z+qe^{-z}}\right)^n\,\frac{K_n(x,N)}{n!}$$

We want to write the coefficients of the expansion in terms of $D=d/dx$ acting on a function of $x$. We cannot
substitute $z\leftrightarrow D$ directly, since the $D$ and $x$ do not commute. Introduce another variable $s$.
Replacing $z$ by $D_s=d/ds$, apply both sides to a function $f(s)$:
$$e^{xD_s}f(s)=f(s+x)=(pe^{D_s}+qe^{-D_s})^N\sum_{n\ge0}\left(\frac{\sinh D_s}{pe^{D_s}+qe^{-D_s}}\right)^n\,\frac{K_n(x,N)}{n!}f(s)$$
Now we move the operators involving $D_s$ past $K_n(x,N)$.
Letting $s=0$, thinking of $f$ as a function of $x$ instead of $s$, 
we can replace $D_s$ by our usual $D=d/dx$,  to get
$$f(x)=\sum_{0\le n\le N}\frac{K_n(x,N)}{n!}(pe^{D}+qe^{-D})^N\left(\frac{\sinh D}{pe^{D}+qe^{-D}}\right)^nf(0)$$
In other words, the coefficients of the Krawtchouk expansion of $f(x)$ are given by
\begin{equation}\label{eq:kexp}
\tilde f(n)=\frac{1}{n!}\,(pe^{D}+qe^{-D})^{N-n}(\sinh D)^nf(0)
\end{equation}
\begin{theorem} For $p>q$, if a polynomial has positive coefficients, then the coefficients of its Krawtchouk expansion are positive. 
For $p=q=1/2$, we have nonnegativity of the Krawtchouk coefficients.
\end{theorem}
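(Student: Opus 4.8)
The plan is to read off the sign of every coefficient in the operator appearing in \eqref{eq:kexp} after expanding it as a power series in $D$, and then to use the elementary fact that evaluating such an operator at $x=0$ extracts a linear combination of the Taylor coefficients of $f$. Concretely, if we write the operator as a formal power series $\frac1{n!}(pe^{D}+qe^{-D})^{N-n}(\sinh D)^n=\sum_{k}c_k\,D^k$ with constant coefficients $c_k$, then for a polynomial $f(x)=\sum_k f_k x^k$ we have $D^k f(0)=k!\,f_k$, so that $\tilde f(n)=\frac1{n!}\sum_k c_k\,k!\,f_k$. Since the hypothesis is that every $f_k$ is nonnegative (strictly positive on the terms that actually occur), the whole question reduces to determining the signs of the $c_k$.

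First I would expand the two basic symbols. Writing $pe^{D}+qe^{-D}=\sum_{m\ge0}\frac{p+(-1)^m q}{m!}D^m$, the even-order coefficients equal $1/m!>0$, while the odd-order coefficients equal $(p-q)/m!$. Thus for $p>q$ every coefficient is strictly positive, whereas for $p=q=\tfrac12$ the odd coefficients vanish and the even ones stay positive. Similarly $\sinh D=\sum_{m\ \mathrm{odd}}D^m/m!$ has strictly positive coefficients on the odd powers and zeros elsewhere, with lowest term $D$.

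Next I would invoke the closure of the class of power series with nonnegative coefficients under products and nonnegative integer powers. For $p>q$ the factor $(pe^{D}+qe^{-D})^{N-n}$ has all coefficients strictly positive (its constant term is $(p+q)^{N-n}=1$), while $(\sinh D)^n$ has nonnegative coefficients and lowest term $D^n$; convolving the two then forces $c_k>0$ for every $k\ge n$ and $c_k=0$ for $k<n$. Feeding this into $\tilde f(n)=\frac1{n!}\sum_{k\ge n}c_k\,k!\,f_k$ and using that the leading coefficient $f_{\deg f}>0$ yields $\tilde f(n)>0$ for all $n\le\deg f$. For $p=q=\tfrac12$ the factor becomes $(\cosh D)^{N-n}$, which contributes only even powers, so $c_k$ is nonzero only when $k\equiv n\pmod 2$; the surviving $c_k$ are still nonnegative, and nonnegativity of $\tilde f(n)$ follows, but no more.

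The one genuinely substantive point — the place where I expect the argument to require care rather than a routine sign-check — is explaining precisely why the $p=q$ case degrades from positivity to mere nonnegativity. The obstruction is exactly the parity cancellation just noted: applied to a single monomial $x^d$ whose parity differs from $n$, every surviving $c_k$ vanishes and $\tilde f(n)=0$, so strict positivity genuinely fails and the sharpest possible statement is nonnegativity. Making this dichotomy transparent, while keeping the passage from the operator power series to the coefficient extraction $D^k f(0)=k!\,f_k$ rigorous for polynomials (where the series is automatically finite), is the crux of the writeup.
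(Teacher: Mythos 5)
Your proof is correct and takes essentially the same route as the paper: the paper's one-line observation that $pe^z+qe^{-z}=\cosh z+(p-q)\sinh z$ has positive (resp.\ nonnegative) power series coefficients for $p>q$ (resp.\ $p=q$), combined with the extraction rule $\psi(D)x^m\bigm|_0=m!\,c_m$, is exactly your termwise sign computation $\frac{p+(-1)^mq}{m!}$ plus coefficient extraction. You merely fill in details the paper leaves implicit (closure of nonnegative-coefficient series under convolution, the vanishing of $c_k$ for $k<n$, and the parity obstruction explaining why $p=q$ gives only nonnegativity), modulo two cosmetic slips: the factor $1/n!$ is counted twice in your formula for $\tilde f(n)$, and the claim that $c_k>0$ for \emph{every} $k\ge n$ fails in the edge case $n=N$ (where only $k\equiv n\pmod 2$ survives), though your final conclusion is unaffected since there $\deg f=N$ forces the right parity.
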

\begin{proof}
Note that if $\psi(D)$ is any formal power series $\sum c_nD^n$ in $D$, then
$$\psi(D)x^m\bigm|_0=m!\,c_m$$
is positive if $c_m$ is. Now, write $pe^z+qe^{-z}=\cosh z+(p-q)\sinh z$. If $p>q$, then the power series expansion has positive coefficients so that 
the coefficients in the Krawtchouk expansion are positive as well. For $p=q$, we have nonnegativity.
\end{proof}

\subsection{Matrix formulation}
As shown in \cite{FS1}, for Krawtchouk polynomials of degree at most $N$, one can use the matrix of $D$, denoted $\hat D$, acting on
the standard basis $\{1, x, x^2,\ldots,x^N\}$ replacing the operator $D$ in eq.\,\eqref{eq:kexp}.\\

For example, take $N=4$. We have
$$\hat D= \begin{pmatrix}0&1&0&0&0 \cr 0&0&2&0&0 \cr 0&0&0&3&0\cr 0&0&0&0&4\cr 0&0&0&0&0 \end{pmatrix}$$
The basic matrices needed are computed directly from the power series for the exponential function. Thus,
$$pe^{\hat D}+qe^{-\hat D}=
 \left( \begin {array}{ccccc} 1&2\,p-1&1&2\,p-1&1\\\noalign{\medskip}0
&1&4\,p-2&3&8\,p-4\\\noalign{\medskip}0&0&1&6\,p-3&6
\\\noalign{\medskip}0&0&0&1&8\,p-4\\\noalign{\medskip}0&0&0&0&1
\end {array} \right)$$
where the relation $p+q=1$ has been used and
$$\sinh\hat D= \left( \begin {array}{ccccc} 0&1&0&1&0\\\noalign{\medskip}0&0&2&0&4
\\\noalign{\medskip}0&0&0&3&0\\\noalign{\medskip}0&0&0&0&4
\\\noalign{\medskip}0&0&0&0&0\end {array} \right)$$

The nilpotence of $\hat D$ reduces the exponentials to polynomials in $\hat D$, making computations with these matrices convenient and fast.

\subsection{Functions of two or more variables}
For functions on the plane, on an $(N+1)$-by-$(M+1)$ grid, we can use the product basis $K_n(x,N)\,K_m(y,M)$. The formula for the expansion coefficients
has the product form, denoting $D_1=\partial/\partial x$, $D_2=\partial/\partial y$,
$$\tilde f(n,m)=\frac{1}{n!\,m!}\,(p_1e^{D_1}+q_1e^{-D_1})^{N-n}(p_2e^{D_2}+q_2e^{-D_2})^{M-m}(\sinh D_1)^n(\sinh D_2)^mf(0,0)$$
where we note the relations $p_1+q_1=p_2+q_2=1$. The basis functions are orthogonal with respect to the corresponding
product of the marginal binomial distributions, i.e. the underlying random variables are independent.\bigskip

A similar construction works for 3 or more variables. In \S5, we present an extension to multinomial distributions.

\section{Krawtchouk polynomials in coding theory}

The Krawtchouk transform appears in coding theory in a different variant as an essential component of MacWilliams' theorem
on weight enumerators \cite{H,SMcW}. It appears in Delsarte's formulation in terms of association schemes as well. 

Fix $N>0$. For Krawtchouk transforms on functions defined on $\{-N, 2-N, \ldots, N-2, N\}$ we use the Fourier-Krawtchouk matrices, $\Phi$,
which we call ``Kravchuk matrices". The entries are the values of the Krawtchouk polynomials as functions on $\{-N,\ldots,N\}$. Thus,
via the mapping $x=N-2j$, $0\le j\le N$, the columns correspond to values of $x$ and we write
$$G(v)=(1+2qv)^{N-j}(1-2pv)^j=\sum_i v^i\Phi_{ij}$$

In \cite{H}, the Krawtchouk polynomials are defined via a slightly different generating function, changing the notation to fit our context,
$$G_s(v)=(1+(s-1)v)^{N-j}(1-v)^j=\sum_i v^i K_i(j;N,s)$$
The difference is primarily one of scaling. Comparing $G$ with $G_s$, replacing $v$ by $v/(2p)$, we find
$$K_i(j;N,s)=(2p)^{-i}\Phi_{ij}$$
with $s-1=\frac{q}{p}=\frac{1-p}{p}=\frac{1}{p}-1$, or
 $$s=\frac{1}{p}$$
The condition $s\ge 2$ thus corresponds to $p\le q$, complementary to the condition required for positivity in the previous section.\\

Following \cite[p.\,132]{H}, we have:\bigskip
\begin{itemize}     
\item MacWilliams' Theorem: If $A$ is a linear code over $\mathbb{F}_s$ and $B=A^\bot$, its dual, then the weight distribution of $B$ is, up to a factor,
the Krawtchouk transform of the weight distribution of $A$.

\item Delsarte's Theorem: If $A$ is a code over an alphabet of size $s$, then the values of the Krawtchouk transform of
the coefficients of the distance enumerator are nonnegative.
\end{itemize}

In this context, the components of the Krawtchouk transform of a vector $\v$ are defined by
$$\hat v_i=\sum_j K_i(j;N,s)v_j$$

We show how to invert the transform.

\subsection{Inverse transform}

The calculation of \S\ref{sec:ortho} can be recast in matrix form. Set $B$ equal to the diagonal matrix
$$B=\text{diag}\,(p^N,Np^{N-1}q,\ldots,\binom{N}{i}\,p^{N-i}q^i,\ldots,q^N)$$
Let $\Gamma$ denote the diagonal matrix of squared norms, here without the factors of $n!$,
$$\Gamma=\text{diag}\,(1,N(4pq),\ldots,\binom{N}{i}\,(4pq)^i,\ldots,(4pq)^N)$$

With $G(v)=\sum v^i\Phi_{ij}$, we have, following the calculation of \S\ref{sec:ortho},
\begin{align*}
\sum_{i,j} v^iw^j(\Phi B\Phi^T)_{ij}&=\sum_{i,j,k}v^i w^j\Phi_{ik}B_{kk}\Phi_{jk}\\
&=\avg G(v)G(w).=\sum_i(vw)^i\Gamma_{ii}
\end{align*}

In other words, the orthogonality relation takes the form

$$\Phi B\Phi^T=\Gamma$$
from which the inverse is immediate
$$\Phi^{-1}=B\Phi^T\Gamma^{-1}$$

A consequence of this formula is that $\det \Phi$ is independent of $p$ and $q$:

$$(\det\Phi)^2=\frac{\det \Gamma}{\det B}=2^{N(N+1)}$$
The sign can be checked for the symmetric case $p=q=1/2$ with the result
$$\det \Phi=\pm\,2^{N(N+1)/2}$$
with the $+$ sign for $N\equiv 0,3 \pmod 4$.\\

We illustrate with an example.\\

Example. For $N=4$,
$$ \Phi= \left( \begin {array}{ccccc} 1&1&1&1&1\\\noalign{\medskip}8\,q&6\,q-2
\,p&4\,q-4\,p&2\,q-6\,p&-8\,p\\\noalign{\medskip}24\,{q}^{2}&12\,{q}^{
2}-12\,pq&4\,{q}^{2}-16\,pq+4\,{p}^{2}&-12\,pq+12\,{p}^{2}&24\,{p}^{2}
\\\noalign{\medskip}32\,{q}^{3}&8\,{q}^{3}-24\,p{q}^{2}&-16\,p{q}^{2}+
16\,{p}^{2}q&24\,{p}^{2}q-8\,{p}^{3}&-32\,{p}^{3}\\\noalign{\medskip}
16\,{q}^{4}&-16\,p{q}^{3}&16\,{p}^{2}{q}^{2}&-16\,{p}^{3}q&16\,{p}^{4}
\end {array} \right)$$
where we keep $p$ and $q$ to show the symmetry.

$$Q=2^4\,B\Phi^T\Gamma^{-1}=\left( \begin {array}{ccccc} 16\,{p}^{4}&8\,{p}^{3}&4\,{p}^{2}&2\,p&1
\\\noalign{\medskip}64\,{p}^{3}q&8\,{p}^{2} \left( 3\,q-p \right) &8\,
p \left( q-p \right) &2\,q-6\,p&-4\\\noalign{\medskip}96\,{p}^{2}{q}^{
2}&24\,qp \left( q-p \right) &4\,{q}^{2}-16\,pq+4\,{p}^{2}&-6\,q+6\,p&
6\\\noalign{\medskip}64\,p{q}^{3}&8\,{q}^{2} \left( q-3\,p \right) &-8
\,q \left( q-p \right) &6\,q-2\,p&-4\\\noalign{\medskip}16\,{q}^{4}&-8
\,{q}^{3}&4\,{q}^{2}&-2\,q&1\end {array} \right)$$

satisfies $Q\Phi=\Phi Q=2^4\,I$.

\subsection{Modified involution property}
For $p=1/2$, there is a simpler approach to inversion. Namely the identity
$$\Phi^2=2^N\,I$$
We will see how this is modified for $p\ne1/2$, which will result in a very simple form for $\Phi^{-1}$.

\begin{proposition}
Let $P$ be the diagonal matrix 
$$P=\mbox{\rm diag}\,((2p)^N,\ldots,(2p)^{N-j},\ldots,1)$$
Let $P'$ be the diagonal matrix
$$P'=\text{\rm diag}\,(1,\ldots,(2p)^j,\ldots,(2p)^N)$$
Then
$$\Phi P\Phi=2^NP'$$
\end{proposition}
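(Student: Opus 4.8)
The plan is to verify the identity entry by entry, exploiting the generating function $\sum_i v^i\Phi_{ij}=(1+2qv)^{N-j}(1-2pv)^j$ that encodes the $j$-th column. Fixing the column index $k$, I would form the series $\sum_i v^i(\Phi P\Phi)_{ik}=\sum_j\bigl(\sum_i v^i\Phi_{ij}\bigr)(2p)^{N-j}\Phi_{jk}$ and perform the inner sum over $i$ first, replacing the left factor by its closed form. Absorbing the factor $(2p)^{N-j}$ coming from $P_{jj}$ into the first base gives
$$\sum_i v^i(\Phi P\Phi)_{ik}=\sum_j\bigl(2p(1+2qv)\bigr)^{N-j}(1-2pv)^j\,\Phi_{jk}.$$

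Next I would write the surviving factor as $\Phi_{jk}=[u^j]g_k(u)$, the $u^j$-coefficient of the column generating function $g_k(u)=(1+2qu)^{N-k}(1-2pu)^k$. Pulling $\bigl(2p(1+2qv)\bigr)^{N}$ out of the sum turns the series into $\sum_{0\le j\le N}t^j\,[u^j]g_k(u)$ with $t=\dfrac{1-2pv}{2p(1+2qv)}$. The crucial observation is that $g_k$ is a polynomial of degree exactly $N$, so truncating at $j=N$ discards nothing and the finite sum collapses to the evaluation $g_k(t)$. This collapse of the $j$-sum to a single substitution is the step on which the whole argument turns.

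It then remains to evaluate $g_k(t)=(1+2qt)^{N-k}(1-2pt)^k$. Using $p+q=1$, a short computation gives the clean forms $1+2qt=\bigl(p(1+2qv)\bigr)^{-1}$ and $1-2pt=2v/(1+2qv)$. Substituting these and restoring the prefactor $\bigl(2p(1+2qv)\bigr)^N$, every power of $(1+2qv)$ cancels while the powers of $p$ and $2$ collect, leaving exactly $2^N(2p)^k\,v^k$. Reading off the coefficient of $v^i$ yields $(\Phi P\Phi)_{ik}=2^N(2p)^k\delta_{ik}=2^N P'_{ik}$, which is the asserted identity.

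I expect the main difficulty to be bookkeeping rather than anything conceptual: one must keep two cancelling mechanisms straight — the exact-degree argument that converts the $j$-sum into $g_k(t)$, and the algebraic simplification (hinging on $p+q=1$) that makes all $(1+2qv)$ factors disappear. As a sanity check, at $p=q=1/2$ both $P$ and $P'$ reduce to the identity and the statement specializes to the involution $\Phi^2=2^NI$ quoted just above.
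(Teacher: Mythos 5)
Your proposal is correct and is essentially the paper's own argument: the paper's sketched proof likewise sums over the first index to replace $\Phi$ by its column generating function and then collapses the middle sum by evaluating that generating function at the transformed argument $t=\dfrac{1-2pv}{2p(1+2qv)}$, with $p+q=1$ driving the cancellations. The only difference is bookkeeping --- the paper introduces a second variable $w$ and binomial weights $\binom{N}{j}$ so the last index is summed by the binomial theorem, yielding the identity $\sum_{i,j,k}v^i\Phi_{ik}(2p)^{N-k}\Phi_{kj}w^j\binom{N}{j}=2^N(1+2pvw)^N$, whereas you fix the column $k$ and compare coefficients of $v^i$ directly, and you make explicit the degree-at-most-$N$ justification for the collapse of the $j$-sum that the paper's sketch leaves implicit.
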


We just sketch the proof as it is similar to that for orthogonality.

\begin{proof}
The matrix equation is the same as the corresponding identity via generating functions. Namely,
$$\sum_{i,j,k}v^i\Phi_{ik}(2p)^{N-k}\Phi_{kj}w^j\binom{N}{j}=2^N(1+2pvw)^N$$
First, sum over $i$, using the generating function $G(v)$, with $j$ replaced by $k$. Then sum over $k$, again using the generating 
function. Finally, summing over $j$ using the binomial theorem yields the desired result, via $p+q=1$.
\end{proof}

Thus,

\begin{corollary}
$$\Phi^{-1}=2^{-N}\,P\Phi P'^{-1}$$
\end{corollary}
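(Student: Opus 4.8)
The plan is to derive the corollary directly from the proposition $\Phi P\Phi = 2^N P'$, which has already been established. Since the desired conclusion is a formula for $\Phi^{-1}$, the natural approach is to rearrange the proposition into the form $\Phi \cdot (\text{something}) = I$ so that the parenthesized expression is forced to be the inverse.

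First I would rewrite the matrix identity $\Phi P \Phi = 2^N P'$ by multiplying both sides on the left by $2^{-N}\,(P')^{-1}$, which is legitimate because $P'$ is diagonal with nonzero entries (its diagonal consists of powers of $2p$, and $p\neq 0$ since $pq\neq 0$), hence invertible. This yields $2^{-N}(P')^{-1}\Phi P\Phi = I$. By the uniqueness of the two-sided inverse of the square matrix $\Phi$ — which is invertible since the proposition exhibits it as a factor in a product equal to the nonsingular matrix $2^N P'$ — the left factor multiplying the final $\Phi$ must equal $\Phi^{-1}$. That is, $\Phi^{-1} = 2^{-N}(P')^{-1}\Phi P$.

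The remaining step is to reconcile this with the stated form $\Phi^{-1} = 2^{-N}P\Phi (P')^{-1}$. The key observation is that all of $P$, $P'$, and $\Phi$ interact through diagonal conjugation, and in fact $P$ and $P'$ are related by reversal of the diagonal entries, so I would verify that $(P')^{-1}\Phi P = P\Phi (P')^{-1}$ as an identity of matrices. This reduces to checking the relation entrywise: comparing $\bigl((P')^{-1}\Phi P\bigr)_{ij} = (2p)^{-i}\Phi_{ij}(2p)^{N-j}$ against $\bigl(P\Phi(P')^{-1}\bigr)_{ij}=(2p)^{N-i}\Phi_{ij}(2p)^{-j}$, one sees both equal $(2p)^{N-i-j}\Phi_{ij}$, so the two forms coincide. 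Substituting gives precisely $\Phi^{-1}=2^{-N}P\Phi(P')^{-1}$.

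I expect the main obstacle to be bookkeeping the indexing conventions for $P$ and $P'$ rather than any genuine difficulty: the diagonal entry of $P$ at position $j$ is $(2p)^{N-j}$ while that of $P'$ is $(2p)^{j}$, so care is needed to confirm that the exponents combine as claimed and that the reversal symmetry is invoked correctly. Once the entrywise commutation $(P')^{-1}\Phi P = P\Phi(P')^{-1}$ is confirmed, the corollary follows immediately from the uniqueness of inverses, with no further computation required.
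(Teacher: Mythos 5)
Your proof is correct and is essentially the paper's (implicit) argument: the corollary follows by immediate rearrangement of the proposition $\Phi P\Phi=2^NP'$, and your entrywise check is right. Note only that right-multiplying the proposition by $(P')^{-1}$ yields $\Phi\cdot\bigl(2^{-N}P\Phi (P')^{-1}\bigr)=I$ directly, so your left-multiplication detour and the reconciliation step $(P')^{-1}\Phi P=P\Phi (P')^{-1}$ --- which indeed holds for any matrix in place of $\Phi$, since $PP'=(2p)^NI$ --- could be skipped entirely.
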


\begin{section}{Krawtchouk polynomials in 2 or more variables}
Now we will show a general construction of Krawtchouk polynomials in variables $(j_1,j_2,\ldots,j_d)$, running
from $0$ to $N$ according to the level $N$. These systems are analogous to wavelets in that they have a dimension, $d$,
and a resolution $N$.
\subsection{Symmetric representation of a matrix}
Given a $d\times d$ matrix $A$, we will find the ``symmetric representation" of $A$, the action on
the symmetric tensor algebra of the underlying vector space. This is effectively the action 
of the matrix $A$ on vectors extended to polynomials in $d$ variables. 

Introduce commuting variables $x_1,\ldots,x_d$. Map
$$y_i=\sum_{j}A_{ij}x_j$$
We use multi-indices,
$m=m_1,\ldots,m_d$, $m_i\ge0$, similarly for $n$. Then a monomial
$$x^m=x_1^{m_1}x_2^{m_2}\cdots x_d^{m_d}$$
and similarly for $y^n$. The induced map at level $N$ has matrix elements $\bar A_{nm}$ determined by the expansion
$$ y^n=\sum_{m} \bar A_{nm}x^m$$
One can order the matrix entries lexicographically corresponding to the monomials $x^m$.
We call this the \textsl{induced matrix} at level $N$. Often the level is called the \textsl{degree} since
the induced matrix maps monomials of homogeneous degree $N$ to polynomials of homogeneous degree $N$.

We introduce the special matrix $B$ which is a diagonal matrix with multinomial coefficients as entries.
$$B_{nm}=\delta_{nm}\binom{N}{n}=\frac{N!}{n_1!\,n_2!\,\cdots n_d!}$$
For simplicity, for $B$ we will not explicitly denote the level or dimension, as it must be consistent with the context.

The main feature of the map $A\to \bar A$ is that at each level it is a multiplicative homomorphism, that is,
$$\overline{A_1A_2}=\bar A_1\,\bar A_2$$
as follows by applying the definition to $y_i=\sum (A_1A_2)_{ij}x_j$ first to $A_1$, then to $A_2$. Observe, then, that the identity map is preserved
and that inverses map to inverses. However, transposes require a separate treatment.

\subsubsection{Transposed matrix. }
The basic lemma is the relation between the induced matrix of $A$ with that of its transpose.
We denote the transpose of $A$, e.\,g.\,, by $A^T$. 

\begin{lemma}
The induced matrices at each level satisfy
$$ \overline{A^T}=B^{-1}{\bar A}^T B$$
\end{lemma}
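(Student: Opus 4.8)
The plan is to stay inside the generating-function framework the paper already uses and to reduce the matrix identity to a single ``master'' polynomial identity together with the trivial symmetry of a bilinear form under transposition. First I would bring in a second vector of commuting variables $s=(s_1,\dots,s_d)$ alongside the existing $x=(x_1,\dots,x_d)$ and record the master identity
\[
\Bigl(\sum_{i,j}A_{ij}\,s_i\,x_j\Bigr)^N=\sum_{n,m}\binom{N}{n}\,\bar A_{nm}\,s^n x^m ,
\]
the sums running over multi-indices $n,m$ of degree $N$. This follows immediately from the two facts already in hand: with $y_i=\sum_j A_{ij}x_j$, the multinomial theorem gives $\bigl(\sum_i s_i y_i\bigr)^N=\sum_n\binom{N}{n}s^n y^n$, and the defining relation $y^n=\sum_m\bar A_{nm}x^m$ substitutes in to produce the right-hand side. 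Note that the multinomial coefficients $\binom{N}{n}$ that appear are exactly the diagonal entries of $B$; this is the structural reason $B$ must enter the final formula, and it is forced by the multinomial normalisation rather than imposed by hand.

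The second step is to observe that $\sum_{i,j}A_{ij}s_ix_j$ is a scalar and hence equals $\sum_{i,j}(A^T)_{ij}\,x_i\,s_j$. Because the master identity was proved for an arbitrary matrix, I may instantiate it a second time at the matrix $A^T$, now with $x$ in the ``left'' slot and $s$ in the ``right'' slot, obtaining
\[
\Bigl(\sum_{i,j}A_{ij}\,s_i\,x_j\Bigr)^N=\sum_{n,m}\binom{N}{n}\,\overline{A^T}_{nm}\,x^n s^m .
\]
Equating the two expansions of the same polynomial and comparing the coefficient of $s^n x^m$ yields $\binom{N}{n}\bar A_{nm}=\binom{N}{m}\,\overline{A^T}_{mn}$. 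Rearranging and relabelling the indices gives $\overline{A^T}_{nm}=\bigl(\binom{N}{m}/\binom{N}{n}\bigr)\bar A_{mn}$, which is precisely the $(n,m)$ entry of $B^{-1}\bar A^T B$, since $B$ is diagonal with $B_{nn}=\binom{N}{n}$ and $(\bar A^T)_{nm}=\bar A_{mn}$. That establishes the lemma.

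I expect the only real obstacle to be bookkeeping: retaining the homogeneity constraint $|n|=|m|=N$, and above all tracking which variable block plays the left and which the right role when the master identity is reused for $A^T$, since it is exactly the swap $s\leftrightarrow x$ that transposes the induced matrix and produces the index exchange $\bar A_{nm}\to\overline{A^T}_{mn}$. Once one accepts that the master identity holds for any matrix (it uses nothing special about $A$), the coefficient comparison is mechanical and the conjugation by $B$ drops out. One could alternatively phrase the entire argument as the statement that $\bar A$ and $\overline{A^T}$ are mutually adjoint with respect to the weighted pairing $\langle x^n,x^m\rangle=\delta_{nm}/\binom{N}{n}$, which encodes the same identity $\binom{N}{n}\bar A_{nm}=\binom{N}{m}\,\overline{A^T}_{mn}$; but the generating-function route above is more self-contained and mirrors the orthogonality computation of \S\ref{sec:ortho}.
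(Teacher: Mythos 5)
Your proof is correct and takes essentially the same route as the paper: the paper also expands the $N$-th power of the bilinear form $F=\sum_{i,j}x_iA_{ij}y_j$ in two ways, once via $\bar A$ and once, after rewriting $F=\sum_{i,j}x_i(A^T)_{ji}y_j$, via $\overline{A^T}$, then matches coefficients to obtain $\binom{N}{n}\bar A_{nm}=\binom{N}{m}\,\overline{A^T}_{mn}$, i.e.\ ${\bar A}^TB=B\,\overline{A^T}$. Your auxiliary variables $s$ play exactly the role of the paper's $y$, so the only difference is packaging (your explicitly stated ``master identity'' instantiated at $A^T$).
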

\begin{proof}
Start with the bilinear form $\displaystyle F=\sum_{i,j}x_iA_{ij}y_j$. Then, from the definition of $\bar A$,
$$F^N=\sum_{n,m} x^n \bar A_{nm} y^m \binom{N}{n}$$
Now write $\displaystyle F=\sum_{i,j}x_i(A^T)_{ji}\,y_j$ with
$$F^N=\sum_{n,m} \binom{N}{m}\,y^m\overline{A^T}_{mn}\,x^n$$

Matching the above expressions, we have, switching indices appropriately,
$$({\bar A}^T B)_{mn}=\bar A_{nm}\binom{N}{n}=\binom{N}{m}\, \overline{A^T}_{mn}=(B\, \overline{A^T})_{mn}$$
which is the required relation.
\end{proof}

\subsection{General construction of orthogonal polynomials with respect to a multinomial distribution}
For the remainder of the article, we work in $d+1$ dimensions, with the first coordinate subscripted with $0$.

The multinomial distribution extends the binomial distribution to a sequence of independent random variables where
one of $d$ choices occurs at each step, choice $i$ occurring with probability $p_i$. The probability of none of the $d$ choices
is $p_0=1-p_1-p_2-\cdots-p_d$. The probabilities for a multinomial distribution at step $N$are given by
$$p(j_1,j_2,\ldots,j_d)= \binom{N}{j_0,j_1,j_2,\ldots,j_d} p_0^{j_0}p_1^{j_1}\cdots p_d^{j_d}$$
this being the joint probability distribution that after $N$ trials, choice $i$ has occurred $j_i$ times, with none of them occurring
$j_0=N-j_1-\cdots-j_d$ times. Let $P$ denote the diagonal matrix with diagonal $(p_0,p_1,\ldots,p_d)$. Then with
$$y_i=\sum_j P_{ij} x_j=p_ix_i$$
we see that $y^n=p^nx^n$ which implies that, at each level $N$, $\bar P$ is diagonal with entries
$$\bar P_{nm}=\delta_{nm}p^n=p_0^{N-n_1-\cdots-n_d}p_1^{n_1}\cdots p_d^{n_d}$$
In other words, the multinomial distribution comprises the entries of the diagonal matrix $B\bar P$.

Now for the construction. Start with an orthogonal matrix $W$, a diagonal matrix of probabilities $P$, and 
a diagonal matrix $D$ with positive entries on the diagonal. Let
\begin{equation}\label{eq:start}
A=P^{-1/2}\,W\,D^{1/2}
\end{equation}
where the power on a diagonal matrix is applied entrywise. Then it is readily checked that
$$A^TPA=D$$
which will generate the squared norms of the polynomials, to be seen shortly. Using the homomorphism property, we have, via the Lemma,
$$ \overline{A^T} \bar P \bar A=B^{-1}{\bar A}^TB\bar P \bar A=\bar D$$
Or, setting $\Phi={\bar A}^T$,
\begin{equation}\label{eq:krav}
\Phi\, B\bar P\, \Phi^T=B\bar D
\end{equation}
with both $B\bar P$ and $B\bar D$ diagonal.

Taking as variables the column indices, writing $j$ for $m$, we have the $n^{\text{th}}$ Krawtchouk polynomial
$$K_n(j;N,p)=\Phi_{nj}$$
at level $N$. The relation given by equation \eqref{eq:krav} is the statement that \bigskip

\textit{
the Krawtchouk polynomials are orthogonal with respect
to the multinomial distribution, $B\bar P$, with squared norms given by the entries of the induced matrix $B\bar D$.
}\bigskip

To summarize, starting with the matrix $A$ as in equation \eqref{eq:start}, form the induced matrix at level $N$. Then the
corresponding Krawtchouk polynomials are functions of the column labels of the transpose of the induced matrix.
Apart from the labelling conventions, then, in fact they are polynomials in the row labels of the original induced matrix.

Finally, note that the basic case arises from the choice
$$A=\begin{pmatrix} 1&1\\ 1& -1\end{pmatrix}$$
for $d=1$ with $p=1-p=1/2$, $D=I$.

\begin{remark} A useful way to get a symmetric orthogonal matrix $W$ is to start with any vector, $v$, form the rank-one projection,
$V=vv^T/v^Tv$ and take for $W$ the corresponding reflection $2V-I$.
\end{remark}

\subsection{Examples}
\subsubsection{Two variables. }

For two variables, start with the $3\times3$ matrices

$$A= \left( \begin {array}{rrr} 1&1&1\\\noalign{\medskip}1&-1&0
\\\noalign{\medskip}1&1&-2\end {array} \right) \,,\qquad
P= \left( \begin {array}{ccc} 1/3&0&0\\\noalign{\medskip}0&1/2&0
\\\noalign{\medskip}0&0&1/6\end {array} \right) $$
and $D$ the identity. We find for the level two induced matrix
$$\Phi^{(2)}=\left( \begin {array}{rrrrrr} 1&1&1&1&1&1\\\noalign{\medskip}2&0&2&-2
&0&2\\\noalign{\medskip}2&1&-1&0&-2&-4\\\noalign{\medskip}1&-1&1&1&-1&
1\\\noalign{\medskip}2&-1&-1&0&2&-4\\\noalign{\medskip}1&0&-2&0&0&4
\end {array} \right)$$
indicating the level explicitly. These are the values of the polynomials evaluated at integer values of the variables $(j_1,j_2,\ldots)$.
So multiplying a data vector on either side will give the corresponding Krawtchouk transform of that vector.
\subsubsection{Three variables. }
This example is very close to the basic case for $d=1$. Start with the vector $v^T=(1,-1,-1,-1)$. Form the corresponding
rank-one projection and the associated reflection, as indicated in the remark above. We find
$$A= \left( \begin {array}{rrrr} 1&1&1&1\\\noalign{\medskip}1&1&-1&-1
\\\noalign{\medskip}1&-1&1&-1\\\noalign{\medskip}1&-1&-1&1\end {array} \right)$$

and take the uniform distribution $p_i=1/4$, and $D=I$. We find
$$\Phi^{(2)}= \left( \begin {array}{rrrrrrrrrr} 1&1&1&1&1&1&1&1&1&1
\\\noalign{\medskip}2&2&0&0&2&0&0&-2&-2&-2\\\noalign{\medskip}2&0&2&0&
-2&0&-2&2&0&-2\\\noalign{\medskip}2&0&0&2&-2&-2&0&-2&0&2
\\\noalign{\medskip}1&1&-1&-1&1&-1&-1&1&1&1\\\noalign{\medskip}2&0&0&-
2&-2&2&0&-2&0&2\\\noalign{\medskip}2&0&-2&0&-2&0&2&2&0&-2
\\\noalign{\medskip}1&-1&1&-1&1&-1&1&1&-1&1\\\noalign{\medskip}2&-2&0&0
&2&0&0&-2&2&-2\\\noalign{\medskip}1&-1&-1&1&1&1&-1&1&-1&1\end {array} \right)$$
With $A^2=4I$, we have, in addition to the orthogonality relation, as in equation \eqref{eq:krav}, that $(\Phi^{(2)})^2=16I$.
\end{section}

\begin{section}{Conclusion}
Using matrix methods allows for a clear formulation of the properties of Krawtchouk polynomials and
Krawtchouk transforms. Working with polynomials or with vectors, computations can be done very
efficiently. We have shown how to construct Krawtchouk polynomials in an arbitrary (finite) number of variables,
with enough flexibility in the parameters to allow for a wide range of potential applications.
\end{section}

\section{Appendix}
Here is maple code for producing the symmetric powers of a matrix.
The arguments are the matrix $X$ and the level $N$, denoted dg in the code, for ``degree".

\begin{verbatim}
  SYMPOWER := proc(X, dg)
  local nd, ND, XX, x, y, strt, i, vv, yy, j, ww,kx,kk;
    nd := (linalg:-rowdim)(X);
    ND := (combinat:-binomial)(nd + dg - 1, dg);
    XX := matrix(ND, ND, 0);
    x := vector(nd);
    y := (linalg:-multiply)(X, x);
    strt := (combinat:-binomial)(nd + dg - 1, dg - 1) - 1;
    for i to ND do vv := (combinat:-inttovec)(strt + i, nd);
      yy := product(y[kk]^vv[kk], kk = 1 .. nd);
      for j to ND do ww := (combinat:-inttovec)(strt + j, nd);
        XX[i, j] := coeftayl(yy, 
        [seq(x[kx], kx = 1 .. nd)] = [seq(0, kx = 1 .. nd)], ww);
      end do;
    end do;
    evalm(XX);
  end:"outputs the symmetric power of X";
\end{verbatim}

\end{document}